\documentclass[twoside,leqno,twocolumn]{article}
\pdfoutput=1
\usepackage{ltexpprt}
\usepackage{graphicx}
\usepackage{amsmath}
\usepackage{amssymb}
\usepackage{booktabs}
\usepackage{multirow}
\usepackage{graphicx}
\usepackage{subcaption}
\usepackage{balance} 
\usepackage[hide]{chato-notes}

\newcommand{\parentstransitions}{{\sc ParentTransitions}}
\newcommand{\edgetransitions}{{\sc EdgeTransitions}}
\newcommand{\childrentransitions}{{\sc ChildrenTransitions}}
\newcommand{\nodeitems}{{\sc NodeItems}}

\newcommand{\shortparentstransitions}{{\sc PT}}
\newcommand{\shortedgetransitions}{{\sc ET}}
\newcommand{\shortchildrentransitions}{{\sc CT}}
\newcommand{\shortnodeitems}{{\sc NI}}

\newcommand{\answer}{\ensuremath{\tt A}}


\newcommand{\mcproblem}{{\textsc{Markov Chain Monitoring}}}
\newcommand{\nodeproblem}{{\textsc{Node-Monitoring}}}
\newcommand{\edgeproblem}{{\textsc{Edge-Monitoring}}}
\newcommand{\variant}[1]{{#1}}

\newcommand{\nodegreedy}{{\texttt{NodeGreedy}}}
\newcommand{\edgegreedy}{{\texttt{EdgeGreedy}}}
\newcommand{\edgeDP}{{\texttt{EdgeDP}}}

\newcommand{\hubway}{{\texttt{Hubway}}}
\newcommand{\autonomoussystems}{{\texttt{AS}}}
\newcommand{\ba}{{\texttt{BA}}}
\newcommand{\grid}{{\texttt{Grid}}}
\newcommand{\geo}{{\texttt{Geo}}}

\newcommand{\uniform}{{\tt Uniform}}
\newcommand{\direct}{{\tt Direct}}
\newcommand{\inverse}{{\tt Inverse}}
\newcommand{\ego}{{\tt Ego}}


\newcommand{\indegree}{{\tt In-Degree}}
\newcommand{\nodenumitems}{{\tt Node-NumItems}}
\newcommand{\edgenumitems}{{\tt Edge-NumItems}}
\newcommand{\inprobability}{{\tt In-Probability}}
\newcommand{\nodebetweenness}{{\tt Node-Betweenness}}
\newcommand{\edgebetweenness}{{\tt Edge-Betweenness}}
\newcommand{\closeness}{{\tt Closeness}}
\newcommand{\probability}{{\tt Probability}}

\newcommand{\markovchain}{{\textit{Markov chain}}}
\newcommand{\expecteduncertainty}{{\textit{expected uncertainty}}}


\newcommand{\children}{\ensuremath{\kappa}}
\newcommand{\transition}{\ensuremath{\mathbf{P}}}

\newcommand{\initial}{\ensuremath{\mathbf{x}}}
\newcommand{\final}{\ensuremath{\mathbf{z}}}
\newcommand{\uncertainty}{\ensuremath{F}}
\newcommand{\objective}{\ensuremath{\mathbf{F}}}
\newcommand{\variance}{\ensuremath{\mathbf{var}}}
\newcommand{\rvFinal}{\ensuremath{\mathbf{Z}}}

\newcommand{\prob}{\ensuremath{\mathbf{Pr}}}

\DeclareMathOperator*{\argmin}{arg\,min}

\newcommand{\etal}{\emph{et al.}}

\newtheorem{problem}{Problem}

\newcommand{\spara}[1]{\smallskip\noindent{\bf{#1}}}

\newcommand{\bpara}[1]{\bigskip\noindent{\bf{#1}}}

\newcommand{\squishlist}{\begin{list}{$\bullet$}
  { \setlength{\itemsep}{0pt}
     \setlength{\parsep}{3pt}
     \setlength{\topsep}{3pt}
     \setlength{\partopsep}{0pt}
     \setlength{\leftmargin}{1.5em}
     \setlength{\labelwidth}{1em}
     \setlength{\labelsep}{0.5em} } }
\newcommand{\squishend}{
\end{list}  }

\begin{document}

\title{\Large Markov Chain Monitoring}

\author{Harshal A. Chaudhari\\ Computer Science \\ Boston University
\and
Michael Mathioudakis\\Computer Science \\ University of Helsinki
\and
Evimaria Terzi\\ Computer Science \\ Boston University
}
\date{}

\maketitle







\begin{abstract}
In networking applications, one often wishes to obtain estimates
about the number of objects at different parts of the network
(e.g., the number of cars at an intersection of a road network 
or the number of packets expected to reach a node in a computer network)
by monitoring the traffic in a small number of network nodes or edges. 
We formalize this task by defining the \mcproblem\ problem. 

Given an initial distribution of
items over the nodes of a {\markovchain}, we wish to
estimate the distribution of items at subsequent times. 
We do this by asking a limited number of queries that retrieve, for example,  
how many items transitioned to a specific node
or over a specific edge at a particular time.
We consider different types of queries, each defining a different variant of the {\mcproblem}.
For each variant, we design efficient algorithms for choosing the  queries that make our estimates 
as accurate as possible.
In our experiments with synthetic and real datasets we demonstrate the
efficiency and the efficacy of our algorithms in a variety of settings.

%
\end{abstract}
\section{Introduction}
In this paper, we introduce and solve the 
 \mcproblem\ problem. 
As multiple items move on a graph in \markovchain-fashion, 
our goal is to keep track of their distribution 
across the graph.
Towards that end, the {\mcproblem} 
problem
seeks to identify a limited number of nodes (or edges)
so that, once we know exactly 
how many items reside on (resp.\ traverse) them, our uncertainty 
of the distribution of items on the graph
is minimized.

%

This problem finds applications in many network settings.
In a typical such setting, one wishes to have accurate estimates of
the number of items that reside at various
nodes of a network (e.g., vehicles that are at the intersections of a road network)
however faces constraints in how big a part of the network 
(e.g., intersections or road segments) they can monitor at any time.
For instance, in urban traffic networks 
different parts of the network are active at different 
times of the day or week. Therefore, actively measuring traffic 
at all points of the network would waste resources.
The task applies similarly on other types of networks including
computer or peer-to-peer networks where the items that propagate through the network are packages
or files.

To the best of our knowledge, we are the first to introduce and study the {\mcproblem} problem.
Nevertheless, this problem is similar
to tasks such as outbreak detection, 
sparsification of influence network, and the wider area
of node and edge {\it centrality} on graphs.
A major difference with previous work is that 
in \mcproblem\ the ``centrality" of nodes or edges
is defined not simply in terms of the
underlying graph structure, but also in terms of
the dynamic propagation of items through the network.

We make two assumptions in our problem definition.
The first is that there is a point in time when we have an accurate
estimate of the placement of all items on the graph.
The second is that we can monitor the subsequent placement
of items by issuing a predefined number of {\it monitoring operations} 
(i.e.,  real-time queries on the {\markovchain}).
We consider different types of operations:
ones that retrieve the number of items that reside on specific nodes;
and ones that retrieve the number of items that traverse specific edges. 
In the applications we consider, i.e., urban and computer traffic networks,
the queries correspond to placing measurement devices on particular 
nodes or edges of the network, and retrieving their measurements. 

Technically, different monitoring operations result in different
variants of  {\mcproblem}.
For each variant, we design efficient polynomial-time algorithms. 
For some of these algorithms we demonstrate that they are indeed
optimal, while for others we show that they perform extremely well in practice. 
Our experiments use a diverse set of datasets from urban networks and computer
networks to demonstrate the practical utility of our setting. 
For example, our experiments with data from the Hubway bike-sharing 
network of Boston, where the nodes are bike docks and the items that propagate
through the network are the bikes themselves, identify as candidates for monitoring
the stations that are close to the most busy Boston spots.


\section{Related work}

To the best of our knowledge, we are the first to introduce and study the
 \mcproblem\ problem. However, this problem can also be seen as 
 a graph centrality problem, where one seeks to 
identify $k$ ``central" nodes or edges to
intercept the movement of items on a graph. Therefore, our work is related
to existing work on graph centrality measures.

Graph centrality measures can be broadly cast into two categories: \emph{individual}
and \emph{group centrality} measures. Individual measures assign 
a score to the each node/edge. 
Group centrality measures assign scores to sets of nodes or edges. 
Usually, computing group-centrality measures requires solving a
complex combinatorial problem. 

Examples of individual centrality scores for nodes and edges are the 
\emph{Pagerank}~\cite{pagerank1999}, 
betweenness~\cite{brandes01faster,erdos15divide,riondato16fast} 
and current flow centralities~\cite{brandes05centrality}. 
Pagerank is one classical 
example of a centrality measure based on a \markovchain\
-- where the centrality of nodes is
quantified as the stationary distribution
of a \markovchain\ on the graph. 
Betweenness centrality and current flow centrality
assign high centrality score to nodes/edges that participate in one or many short
paths between all pairs of nodes in the input network. 
Although a {\markovchain} is used for Pagerank, its computation is very different 
than ours -- after all, Pagerank is an individual centrality measure, while our measures are group centralities.

Group centrality measures that use a {\markovchain} model 
include the {\it Absorbing Random Walk Centrality} 
introduced by Mavroforakis {\etal}~\cite{Mavroforakis:2015}. 
In that work, the centrality of a set of nodes is defined
as the transient time of the \markovchain\ when 
the given nodes are ``absorbing". 
An absorbing {\markovchain} is also assumed by the work of Gionis {\etal}~\cite{gionis13maximizing}. 
In that setting, the authors aim to maximize the positive
opinion of the network with respect to a specific topic by picking $k$ nodes appropriately to endorse this positiveness via posts (e.g., in a social network).
Both these problems are different from ours because we do not consider 
absorbing random walks, but rather a simple {\markovchain}. 
Moreover, our objective to minimize uncertainty is different from the 
objectives of the above papers. 

While \markovchain\ is a simple model that allows us to quantify 
centrality, different models are more appropriate
in other settings.
For example, Ishakian {\etal}~\cite{ishakian12framework} proposed an extension
of betweenness centrality to the group betweenness centrality. In that case, the goal
was to find a group of nodes such that many shortest paths pass through those nodes.
As another example, in epidimiology and information propagation
settings, the underlying process of interest (e.g., the spreading of a virus
or piece of information) is better modeled
as a random cascade.
In such a setting, the measure of 
{\it influence}~\cite{borgs2014maximizing, galhotra2015asim, goyal2011simpath,
Kempe:2003, ohsaka2014fast, Tang:2014} captures another notion of centrality, where
the centrality of a node is defined in
terms of the expected spread of a random cascade that starts at the
given node. In the same setting, two related tasks are those of outbreak 
detection~\cite{leskovec2007cost} and graph 
sparsification~\cite{Mathioudakis:2011}. In the former,
the task is to identify as central those nodes that would intercept 
early a set of observed cascades. In the latter, one seeks edges
that are central in explaining (from a model-sparsity point of view)
the observed cascades as well as possible.

In terms of categorization, the  the {\mcproblem} problem
is a group-centrality measure.
What distinguishes the setting of \mcproblem\
from all the above work is that 
centrality is defined in terms of a combination
of static structure (graph), 
dynamic structure (\markovchain), and real-time activity 
(current placement of items on the graph),  which is not the case in
the aforementioned works.

\section{Setting}
\label{sec:setting}
In this section, we introduce our setting and notation.

\spara{Markov Chain:}
Consider a weighted directed graph 
$G=(V,E,p)$ with $|V|=n$ and $|E|=m$.
We use $\pi(v)\subseteq V$ and $\children(v)\subseteq V$ to 
denote the set of parent and child nodes of node $v$, respectively.
\begin{align*}
\pi(v) & = \{u\mid u\in V, e(u\rightarrow v)\in E\} \\
\kappa(v) & = \{u\mid u\in V, e(v\rightarrow u)\in E\}
\end{align*}
Edges $e(u\rightarrow v)\in E$ are associated with 
real-valued weights $p(u\rightarrow v)\in [0,1]$
such that $\forall u\in V$: 
$
  \sum_{v\in V} p(u\rightarrow v) = 1$.

These weights give rise to a \markovchain\
with transition matrix $\transition$ --
where $\transition(u,v) = p(u,v)$ denotes
the probability of a transition from node $u$ to node $v$. 
Moreover, we assume that a set of items
are distributed among the nodes of $G$. 
We use the row vector $\initial$ to denote the 
the {\it initial} number of items per node; 
that is, $\initial(u)$ is the number of items {initially}
at node $u$.
For the entirety of this paper, 
transition matrix $\transition$ and distribution $\initial$ are 
assumed known and part of the input.

\note[MM]{For consistency, let's use 'position' instead of 'location' of items.}
Consider now a single step of the \markovchain. During this step, each and
every item transitions from the node $u$ where it resides originally
to another node $v$, according to transition probabilities 
$\transition(u,v)$.
At the end of this step, items are 
redistributed among the nodes - and we are no longer certain about their
position. We use $\rvFinal$ to denote the random vector with the number
of items at each node after one transition.
The \emph{expected number of items} at each node is given by: 
$\final = E[\rvFinal] = \initial{\transition}$.


\spara{Quantifying uncertainty:} 
We wish to have 
good point estimates for values $\final(v)$.
We quantify the quality of estimates in terms of the variance
in the number of items on each node after the transition step.
Specifically, let us consider the number $\rvFinal(v)$ of items
on node $v$ after the transition step:
an item previously at node $u$ will transition to
node $v$ according to a Bernoulli distribution with 
success probability $\transition(u,v)$;
and since items transition from node to node independently from each other,
the variance in the number of items $\rvFinal(v)$ is
\begin{align}
\variance(\rvFinal(v))  =\ & \variance(\rvFinal(v) | \transition, \initial) \nonumber \\
 =\ & \sum_{u\in V}\initial(u)\transition(u,v)\left(1-\transition(u,v)\right).
\end{align}
To obtain an aggregate measure of uncertainty $\uncertainty_0$,
we opt to sum the aforementioned quantity over all nodes.
\begin{align}
\uncertainty_0 & = \sum_{v\in V}{\variance(\rvFinal(v))} \nonumber  \\
 & = \sum_{u\in V}\initial(u)\sum_{v\in V}\transition(u,v)\left(1-\transition(u,v)\right) 
\label{eq:uncertainty}
\end{align}

\spara{Monitoring:} 
Given the transition matrix $\transition$ and the 
initial distribution of
items $\initial$, we estimate the distribution of items 
$\rvFinal$ after
one transition step, with
the uncertainty given in Equation~\eqref{eq:uncertainty}.
After one transition step,
we are allowed to 
retrieve information about the position of the items
and thus reduce uncertainty.
We do this by performing ``monitoring operations", i.e. queries on the
position of items on the \markovchain.
These operations are  of the following types:
\squishlist
	\item{\bf \parentstransitions} Retrieve the number of items that transitioned 
  to node $v$ from each  $u\in\pi(v)$;
  \item{\bf \nodeitems} Retrieve the number of items that reside on node 
  $v$ after the transition step;
	\item{\bf \edgetransitions} Retrieve the number of items that transitioned 
  from node $u$ to node $v$;
	\item{\bf \childrentransitions} Retrieve the number of items that 
  transitioned from node $u$ to each child $v\in \children(u)$.
\squishend


\bpara{Expected uncertainty} 
Once we retrieve the answer $\answer$ to a set of monitoring operations,
we have more information about the positioning of items over nodes $V$ 
-- and thus an updated (and non-increased) uncertainty 
$\uncertainty(\answer) = \sum_{v\in V}{\variance(\rvFinal(v) | \answer)}$.
In the setting we consider, however, the challenge we face is {\it not} to compute
the uncertainty {\it given} the information retrieved via a monitoring 
operation, but rather to {\it select the monitoring operations} 
so that the uncertainty we will face {\it after} retrieving $\answer$ is
minimized {\it in expectation}. Therefore, the quantity of interest
is that of \expecteduncertainty\ for a set of operations
that we choose to perform, expressed as 
$E[\sum_{v\in V}{\variance(\rvFinal(v) | \answer)}]$. 

Let us now assume we have chosen to perform a set of operations of 
either one of the aforementioned types. In what follows, we provide 
formulas for the \expecteduncertainty\
in each case.

\spara{Expected uncertainty under \parentstransitions :} 
We 
perform monitoring operations for
a subset $S\subseteq V$ of nodes --
and obtain an answer 
$A_{_\shortparentstransitions}(S) = \{n_{uv}; v\in S,\ e(u\rightarrow v)\in E\}$,
where $n_{uv}$ is the number of transitions to $v$ from its parent node $u$. 
The expected value $\objective_{_\shortparentstransitions}(S)$ of the uncertainty 
$\uncertainty(A_{_\shortparentstransitions}(S))$ 
after these operations is given by
\begin{align}
\label{eq:nodevariance}
\objective_{_\shortparentstransitions}(S) = & E[\uncertainty(A_{_\shortparentstransitions}(S))] \nonumber \\ 
= & \sum_{u\in V}\initial^{\prime}(u)\sum_{v\in V\setminus S}\transition^{\prime}(u,v)\left(1-\transition^{\prime}(u,v)\right) 
\end{align}
where 
\begin{align}
\rho(u,S) & = \sum_{v\in S}\transition(u,v) \nonumber \\
 \initial^{\prime}(u) & = \initial(u)\left(1-\rho(u,S)\right) \label{eq:adjustedx}\\
 \transition^{\prime}(u,v) & = \frac{\transition(u,v)}{1-\rho(u,S)}. \label{eq:adjustedP}
\end{align}
Intuitively, $\initial^{\prime}(u)$ expresses the expected number
of items that transition from $u$ to nodes $v$ {\it other than} those in $S$;
and $\transition^{\prime}(u,v)$ expresses the probability an item transitions
from $u$ to $v$ {\it given} that it does {\it not} transition to those in $S$.
We see, then, that Equation~\eqref{eq:nodevariance} has the same form as 
Equation~\eqref{eq:uncertainty} but is evaluated on adjusted values of
$\initial$ and $\transition$, to take into account the
information we obtain via $A_{_\shortparentstransitions}$.

Note that the expected uncertainty after
the monitoring operations is no larger than $\uncertainty_0$.
The following lemma states that the information retrieved via
\parentstransitions\ monitoring operations decrease the expected uncertainty
about the positioning of items after the transition step.

\begin{lemma}
\[
\objective_{_\shortparentstransitions}(S) \leq \uncertainty_0
\]
\label{lemma:decreased_uncertainty_pt}
\end{lemma}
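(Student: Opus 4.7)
The plan is to prove the inequality by direct computation: substitute the expressions for $\initial'$ and $\transition'$ from Equations~\eqref{eq:adjustedx} and~\eqref{eq:adjustedP} into the definition of $\objective_{_\shortparentstransitions}(S)$, and compare the resulting expression term-by-term against $\uncertainty_0$.

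First I would simplify a single summand of $\objective_{_\shortparentstransitions}(S)$. For $u\in V$ and $v\in V\setminus S$, the product $\initial'(u)\transition'(u,v)\bigl(1-\transition'(u,v)\bigr)$ collapses to $\initial(u)\transition(u,v)\bigl(1-\tfrac{\transition(u,v)}{1-\rho(u,S)}\bigr)$, because the factor $(1-\rho(u,S))$ in $\initial'(u)$ cancels against the denominator of $\transition'(u,v)$. This puts $\objective_{_\shortparentstransitions}(S)$ in a form directly comparable to $\uncertainty_0$.

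Next I would split the inner sum in $\uncertainty_0$ from Equation~\eqref{eq:uncertainty} as $\sum_{v\in S}+\sum_{v\in V\setminus S}$, and subtract $\objective_{_\shortparentstransitions}(S)$ summand by summand. The $v\in S$ part contributes $\sum_{u}\initial(u)\sum_{v\in S}\transition(u,v)(1-\transition(u,v))$, which is non-negative since each factor lies in $[0,1]$. The $v\in V\setminus S$ part contributes, after algebraic simplification, $\sum_{u}\initial(u)\sum_{v\in V\setminus S}\transition(u,v)^2\cdot\tfrac{\rho(u,S)}{1-\rho(u,S)}$, which is also non-negative because $\rho(u,S)\in[0,1)$ whenever $\initial'(u)>0$ (and the boundary case $\rho(u,S)=1$ can be handled by adopting the convention that $\initial'(u)\transition'(u,v)=0$ whenever $\initial'(u)=0$, since then no items leave $u$ to any $v\notin S$). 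Adding these two non-negative contributions yields $\uncertainty_0-\objective_{_\shortparentstransitions}(S)\ge 0$, which is the claim.

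The only mildly delicate point will be handling the degenerate case $\rho(u,S)=1$, where $\transition'(u,v)$ is formally undefined; but since $\initial'(u)=0$ in that case, the contribution of such $u$ to $\objective_{_\shortparentstransitions}(S)$ vanishes by convention, and the inequality still holds trivially on those terms. Apart from this, the argument is a short pointwise algebraic comparison, and I do not anticipate any substantive obstacle beyond keeping the bookkeeping tidy.
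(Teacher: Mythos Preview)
Your proposal is correct and follows essentially the same route as the paper: substitute the definitions of $\initial'$ and $\transition'$ to cancel the $(1-\rho(u,S))$ factor, then compare term-by-term using that $\transition(u,v)/(1-\rho(u,S))\ge \transition(u,v)$. The only differences are cosmetic---you compute the difference $\uncertainty_0-\objective_{_\shortparentstransitions}(S)$ explicitly and handle the degenerate case $\rho(u,S)=1$, whereas the paper bounds $1-\transition'(u,v)\le 1-\transition(u,v)$ directly and then extends the inner sum from $V\setminus S$ to $V$.
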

\begin{proof}
Indeed, using Equations~\eqref{eq:adjustedx} and~\eqref{eq:adjustedP} to expand 
Equation~\eqref{eq:nodevariance} we have
\begin{align}
\objective_{_\shortparentstransitions}(S) =\ & \sum_{u\in V}\initial^{\prime}(u)\sum_{v\in V\setminus S}\transition^{\prime}(u,v)\left(1-\transition^{\prime}(u,v)\right) \nonumber \\
  =\ & \sum_{u\in V}\initial(u)\sum_{v\in V\setminus S}\transition(u,v)\left(1-\frac{\transition(u,v)}{1-\rho(u,S)}\right) \nonumber  \\
  \leq\ & \sum_{u\in V}\initial(u)\sum_{v\in V\setminus S}\transition(u,v)\left(1-\transition(u,v)\right) \nonumber \\
  =\ & \objective_{_\shortparentstransitions}(\emptyset). \nonumber
\end{align}
\end{proof}

\spara{Expected uncertainty under \nodeitems:} 
We 
perform monitoring operations for
a subset $S\subseteq V$ of the nodes --
and obtain an answer 
$A_{_\shortnodeitems}(S) 
= \{n_v; v\in S\}$, where $n_v$ is the number of items at node $v$
after the transition.
For an instance of answer $A_{_\shortnodeitems}(S)$,
let also $A_{_\shortparentstransitions}(S) = \{n_{uv}; v\in S, 
  e(u\rightarrow v)\in E\}$ 
be an answer for \parentstransitions\ on the same set $S$ of nodes
-- which by definition
is {\it consistent with} $A_{_\shortnodeitems}(S)$, in the sense that
\begin{equation}
n_v = \sum_{u\in V} n_{uv}, \forall v\in S.
\end{equation}
The following theorem states that, for the same set of monitored nodes,
\parentstransitions\ and \nodeitems\ lead to the same value of the objective 
function.
\begin{theorem}
\label{theorem:node-equivalence}
$\objective_{_\shortnodeitems}(S) = \objective_{_\shortparentstransitions}(S)$
\begin{proof}
We express $\uncertainty(A_{_\shortnodeitems})$ in terms
of $\uncertainty(A_{_\shortparentstransitions})$ as follows.
(We write `c.w.' for `consistent with').
\begin{align}
\uncertainty(A_{_\shortnodeitems})
 = \sum_{A_{_\shortparentstransitions}
  \ \text{c.w.}\; A_{_\shortnodeitems}
  } \uncertainty(A_{_\shortparentstransitions})
  \prob(A_{_\shortparentstransitions} | A_{_\shortnodeitems})
\end{align}
We can now use the above equation to
express the expected uncertainty $\objective_{_\shortnodeitems}(S)$
in terms of $\objective_{_\shortparentstransitions}(S)$ as:
\begin{align}
\objective_{_\shortnodeitems}(S) & =  E[\uncertainty(A_{\shortnodeitems})] \nonumber \\
& = \sum_{A_{_\shortnodeitems}}\uncertainty(A_{\shortnodeitems}) \prob(A_{_\shortnodeitems}) = \nonumber \\
& = \sum_{A_{_\shortnodeitems}}\;\;\sum_{A_{_\shortparentstransitions}
   \text{c.w.} A_{_\shortnodeitems}
  } \uncertainty(A_{_\shortparentstransitions})
  \prob(A_{_\shortparentstransitions} | A_{_\shortnodeitems}) \prob(A_{_\shortnodeitems}) \nonumber \\
& = \sum_{A_{_\shortnodeitems}}\;\;\sum_{A_{_\shortparentstransitions}
  \ \text{c.w.}\ A_{_\shortnodeitems}
  } \uncertainty(A_{_\shortparentstransitions})
  \prob(A_{_\shortparentstransitions}, A_{_\shortnodeitems}) \nonumber \\
& = \sum_{A_{_\shortnodeitems}}\;\;\sum_{A_{_\shortparentstransitions}
  \ \text{c.w.}\ A_{_\shortnodeitems}
  } \uncertainty(A_{_\shortparentstransitions})
  \prob(A_{_\shortparentstransitions}) \nonumber \\
& = \sum_{A_{_\shortparentstransitions}} 
  \uncertainty(A_{_\shortparentstransitions})
  \cdot \prob(A_{_\shortparentstransitions}) \nonumber \\
& = \objective_{_\shortparentstransitions}(S) ,\nonumber
\end{align}
which concludes the proof.
\end{proof}
\end{theorem}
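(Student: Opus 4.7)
The plan is to leverage the fact that the \parentstransitions\ answer $A_{_\shortparentstransitions}$ uniquely refines the \nodeitems\ answer $A_{_\shortnodeitems}$ via the consistency relation $n_v = \sum_{u\in\pi(v)} n_{uv}$. This refinement structure makes the tower property of conditional expectation a natural tool.

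First, I would establish the pointwise identity
\[
\uncertainty(A_{_\shortnodeitems}) = \sum_{A_{_\shortparentstransitions} \text{ c.w. } A_{_\shortnodeitems}} \uncertainty(A_{_\shortparentstransitions}) \, \prob(A_{_\shortparentstransitions} \mid A_{_\shortnodeitems}),
\]
stating that the uncertainty after observing $A_{_\shortnodeitems}$ equals the expected uncertainty over consistent refinements $A_{_\shortparentstransitions}$. Next, I would substitute this into the definition $\objective_{_\shortnodeitems}(S) = \sum_{A_{_\shortnodeitems}} \uncertainty(A_{_\shortnodeitems}) \prob(A_{_\shortnodeitems})$ and apply $\prob(A_{_\shortparentstransitions} \mid A_{_\shortnodeitems})\, \prob(A_{_\shortnodeitems}) = \prob(A_{_\shortparentstransitions})$---which holds because $A_{_\shortparentstransitions}$ determines $A_{_\shortnodeitems}$---to collapse the resulting double sum over $A_{_\shortnodeitems}$ and consistent $A_{_\shortparentstransitions}$ into the single sum $\sum_{A_{_\shortparentstransitions}} \uncertainty(A_{_\shortparentstransitions}) \prob(A_{_\shortparentstransitions}) = \objective_{_\shortparentstransitions}(S)$, which is exactly Theorem~\ref{theorem:node-equivalence}.

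The hard part is the first identity. By the law of total variance applied to each $\rvFinal(v)$,
\[
\variance(\rvFinal(v) \mid A_{_\shortnodeitems}) = E[\variance(\rvFinal(v) \mid A_{_\shortparentstransitions}) \mid A_{_\shortnodeitems}] + \variance(E[\rvFinal(v) \mid A_{_\shortparentstransitions}] \mid A_{_\shortnodeitems}),
\]
so the identity reduces to showing that the residual $\sum_{v} \variance(E[\rvFinal(v) \mid A_{_\shortparentstransitions}] \mid A_{_\shortnodeitems})$ vanishes. For $v \in S$ the term is zero since $\rvFinal(v) = n_v$ is already pinned down by $A_{_\shortnodeitems}$. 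For $v \notin S$, the conditional mean $E[\rvFinal(v) \mid A_{_\shortparentstransitions}] = \sum_{u} (\initial(u) - \sum_{w \in S} n_{uw}) \transition'(u,v)$ is a linear function of the per-parent counts $\{n_{uw}\}$, and I would control its variance under the conditional law of $\{n_{uw}\}_{u\in\pi(w)}$ given $n_w$---a multivariate-hypergeometric-type distribution arising from conditioning independent Binomials on their sum. This step, which amounts to quantifying how much additional information the finer refinement $A_{_\shortparentstransitions}$ supplies beyond $A_{_\shortnodeitems}$, is where I expect the bulk of the technical effort to sit.
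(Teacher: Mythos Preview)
Your high-level plan mirrors the paper's exactly: assert the pointwise identity
\[
\uncertainty(A_{_\shortnodeitems}) \;=\; E\bigl[\uncertainty(A_{_\shortparentstransitions})\,\big|\, A_{_\shortnodeitems}\bigr],
\]
then take expectations and collapse the double sum using $\prob(A_{_\shortparentstransitions}\mid A_{_\shortnodeitems})\,\prob(A_{_\shortnodeitems})=\prob(A_{_\shortparentstransitions})$. The paper simply states this identity without justification; you correctly isolate it as the crux and, via the law of total variance, reduce it to showing that the residual $\sum_{v}\variance\bigl(E[\rvFinal(v)\mid A_{_\shortparentstransitions}]\,\big|\, A_{_\shortnodeitems}\bigr)$ vanishes. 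This is where your plan will stall: that residual does \emph{not} vanish in general, so the identity you (and the paper) rely on is false, and in fact the claimed equality $\objective_{_\shortnodeitems}(S)=\objective_{_\shortparentstransitions}(S)$ fails.

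For a concrete instance, take nodes $u_1,u_2,w,v_1,v_2$ with $\initial(u_1)=\initial(u_2)=1$, edges $u_i\to w$ and $u_i\to v_i$ each with probability $\tfrac12$, and $S=\{w\}$. Given $A_{_\shortparentstransitions}=(n_{u_1w},n_{u_2w})$ every $\rvFinal(v)$ is determined, so $\uncertainty(A_{_\shortparentstransitions})\equiv 0$ and $\objective_{_\shortparentstransitions}(S)=0$ (Equation~\eqref{eq:nodevariance} confirms this, since $\transition'(u_i,v_i)=1$). But $A_{_\shortnodeitems}$ reveals only $n_w=n_{u_1w}+n_{u_2w}$; on the event $\{n_w=1\}$ (probability $\tfrac12$) each of $\rvFinal(v_1)=1-n_{u_1w}$ and $\rvFinal(v_2)=1-n_{u_2w}$ is conditionally Bernoulli$(\tfrac12)$ with variance $\tfrac14$, giving $\objective_{_\shortnodeitems}(S)=\tfrac12\cdot 2\cdot\tfrac14=\tfrac14>0$. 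The discrepancy is exactly your residual: $E[\rvFinal(v_1)\mid A_{_\shortparentstransitions}]=1-n_{u_1w}$ is not $A_{_\shortnodeitems}$-measurable. So the step you flagged as ``the bulk of the technical effort'' cannot be completed; the law of total variance yields only the inequality $\objective_{_\shortnodeitems}(S)\ge\objective_{_\shortparentstransitions}(S)$, and the pointwise identity the paper asserts at the outset is itself incorrect.
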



\spara{Expected uncertainty under \edgetransitions:}
We perform monitoring operations for
a subset $D\subseteq E$ of the edges --
and obtain an answer 
$A_{_\shortedgetransitions} = A_{_\shortedgetransitions}(D) = \{n_e; e\in D\}$,
where $n_e$ is the number of transitions over edge $e$. 
The expected value $\objective_{_\shortedgetransitions}(D)$ of the uncertainty 
$\uncertainty(A_{_\shortedgetransitions}(D))$ 
after these operations is given by
\begin{align}
\objective_{_\shortedgetransitions}(D) = & E[\uncertainty(A_{_\shortedgetransitions})] \nonumber \\
= & \sum_{u\in V}\initial^{\prime\prime}(u)\sum_{e(u\rightarrow v)\in E\setminus D}\transition^{\prime\prime}(u,v)\left(1-\transition^{\prime\prime}(u,v)\right) \label{eq:edgetransitions}
\end{align}
where 
\begin{align}
\rho(u,D) & =\ \sum_{e(u\rightarrow v)\in D}\transition(u,v) \nonumber\\
 \initial^{\prime\prime}(u) & =\ \initial(u)\left(1-\rho(u,D)\right) \label{eq:etinit}\\
 \transition^{\prime\prime}(u,v) & =\ \frac{\transition(u,v)}{1-\rho(u,D)} \label{eq:ettransit}
\end{align}
Similar to \parentstransitions\ and \nodeitems, expected uncertainty
$\objective_{_\shortedgetransitions}(S)$ is no greater than
$\uncertainty_0$.

\spara{Expected uncertainty under \childrentransitions:}
Assume that we perform monitoring operations for
a subset $S\subseteq V$ of the nodes --
and obtain an answer
$A_{_\shortchildrentransitions}(S) 
= \{n_{uv}; u\in S, (u, v)\in E\}$, where $n_{uv}$ is the number of items that 
transition over edge $(u,v)$.
The expected value $\objective_{_\shortchildrentransitions}(S)$ of the uncertainty 
$\uncertainty(A_{_\shortchildrentransitions}(S))$ 
after these operations is given by
\begin{align}
\label{eq:shortchildrentransitions}
\objective_{_\shortchildrentransitions}(S) & = E[\uncertainty(A_{_\shortchildrentransitions})] = \nonumber \\
& = \sum_{u\in V-S}\initial(u)\sum_{v\in V}\transition(u,v)\left(1-\transition(u,v)\right).
\end{align}
Notice that this quantity is no greater than
$\uncertainty_0$ (Equation~\eqref{eq:uncertainty}), as the outer summation
is performed for a subset of nodes.

\section{Problem Definition}
\label{section:problems}

The general problem of \mcproblem\ is to select the
appropriate monitoring operations to reduce
the expected uncertainty after they are performed.
Stated formally:

\begin{problem}[\mcproblem]
Given a transition matrix $\transition$ and an initial distribution of items $\initial$,
select a set of up to $k$ monitoring operations 
to minimize the expected uncertainty $\objective$. 
\end{problem}

We study variants of the problem --
each defined for a specific type of monitoring operation.
 For simplicity, we refer to these problems
 with the same name as that of the operation type:
\variant{\parentstransitions},
\variant{\nodeitems},
\variant{\childrentransitions}, 
and \variant{\edgetransitions}.

Furthermore, as we saw in Section~\ref{sec:setting},
variants
\variant{\parentstransitions}\ and
\variant{\nodeitems} are equivalent: for the same set of nodes,
operations of the first type
reduce expected uncertainty as much as the second.
Therefore, in what follows,
we treat only the variant of \variant{\nodeitems}, as  our claims apply directly to
\variant{\parentstransitions}\ as well.


\section{The {\variant{\nodeitems}} problem}
\label{sec:nodes}

In this section, we provide the formal problem definition 
of the {\variant{\nodeitems}} problem variant and describe a greedy 
polynomial-time algorithm for solving it.
\begin{problem}[{\variant{\nodeitems}}]
Given $G=(V,E)$, transition matrix $\transition$, initial distribution of items to nodes
$\initial$ and integer $k$, find
$S\subseteq V$ such that $|S|=k$ such that 
$\objective_\shortnodeitems\left(S\right)$ is minimized.
\label{problem:nodes-variant}
\end{problem}
A brute-force way to solve Problem~\ref{problem:nodes-variant}
would be to evaluate the objective function over all node-sets of size $k$.
Obviously such an algorithm is infeasible -- and we thus
study a natural greedy algorithm for the problem, namely \nodegreedy.

\spara{The {\nodegreedy} algorithm:} 
This is a greedy algorithm that performs $k$ iterations; at each iteration,
it adds one more 
node in the solution.
If $S_t$ is the solution at iteration $t$, 
then solution $S_{t+1}$ is constructed by finding the 
node $u\in V\setminus S_t$ such that:
\begin{equation}\label{eq:nodegreedy}
v^\ast=\argmin_{v\in V\setminus S_t}\objective_\shortnodeitems\left(S_{t}\cup\{v\}\right).
\end{equation}
Although in the majority of our experiments that compare the brute-force solutions with those
of {\nodegreedy} the two solutions were identical, we identified some contrived instances 
for which this was not the case. Thus, {\nodegreedy} is not an optimal algorithm for
Problem~\ref{problem:nodes-variant}.

\spara{Running time} 
{\nodegreedy} evaluates Equation~\eqref{eq:nodegreedy} at each iteration. 
A naive implementation of this would require computing Equation~\eqref{eq:nodevariance}
$O(|V|)$ times per iteration, each time using $O(|V|^2)$ numerical operations.
As a first improvement, we avoid the full double summation over $V$
via a summation over edges $E$,
\begin{align}
	\objective_{_\shortparentstransitions}(S) = & \sum_{u\in V}\initial^{\prime}(u)\sum_{v\in V\setminus S}\transition^{\prime}(u,v)\left(1-\transition^{\prime}(u,v)\right) \nonumber \\ 
	= &  \sum_{(u,v)\in E, v\in V\setminus S} 
	\initial^{\prime}(u)\transition^{\prime}(u,v)\left(1-\transition^{\prime}(u,v)\right),
\end{align}
that involves $O(k|V||E|)$ numerical operations.

We can further speed-up the algorithm if 
we re-use at each step the computations 
done in the previous one.
To see how, let $S_t$ (resp.\ $S_{t+1}$) 
be the solution we construct after
$t$ (resp.\ $(t+1)$) iterations and 
let $v^\ast$ be the node such that
$S_{t+1}=S_t\cup v^\ast$. 
Then, for any $u\in V$ we have $
\rho(u,S_t) = \sum_{v\in S_t}\transition(u,v)
$, and therefore
\begin{eqnarray}\label{eq:rho}
\rho(u,S_{t+1}) & = \rho(u,S_{t})+ \transition(u,v^\ast).
\end{eqnarray}
Moreover, for any $S\subseteq V$ let
\begin{align}
B(u,S) & = \sum_{(u,v)\in E\ s.t.\ v\in V\setminus S}\transition^{\prime}(u,v)\left(1-\transition^{\prime}(u,v)\right) \\
& = \sum_{v\in V\setminus S}\frac{\transition(u,v)}{1-\rho(u,S)}\left(1-\frac{\transition(u,v)}{1-\rho(u,S)}\right)\nonumber .	
\end{align}
We can then express $B(u,S_{t+1})$ in terms of
$B(u, S_{t})$:
\begin{eqnarray}\label{eq:B}
\lefteqn{
B(u,S_{t+1})  =} \nonumber \\
& & B(u,S_t)-2\transition(u,v^\ast)\left(1-\rho(u,S_t)-\transition(u,v^\ast)\right). 
\end{eqnarray}
Finally, using Equations~\eqref{eq:rho} and~\eqref{eq:B} 
and algebraic manipulations,
we can express $\objective_{\shortnodeitems}(S_{t+1})$ as follows:
\begin{eqnarray}\label{eq:noderewrite}
\lefteqn{
\objective_{\shortnodeitems}(S_{t+1}) = }\\
&&\sum_{u\in V}\initial(u)\left(\frac{B(u,S_t)}{1-\rho(u,S_{t+1})}-2\transition(u,v^\ast)\right)\nonumber
\end{eqnarray}
Thus, if we store $B(u,S_t)$ and $\rho(u,S_t)$ at iteration $t$, 
then evaluating 
Equation~\eqref{eq:noderewrite} at iteration $t+1$
takes only $O(|V|)$ numerical operations.

For all iterations but the first one,
the above sequence of rewrites enables us to achieve 
a speedup from $O(|V||E|)$ to $O(|V|^2)$ numerical operations
per iteration.
For the first iteration, initializing
the auxiliary quantities $B(u, \emptyset)$, $u\in V$, still takes
$O(|E|)$.
With this book-keeping, the running time of {\nodegreedy} is  reduced
from $O(k|V||E|)$ to $O(|E| + k|V|^2) = O(k|V|^2|)$.
Note also that \nodegreedy\ is amenable to parallelization,
as, given the auxiliary quantities from the previous step, 
we can compute the objective function independently for each 
candidate node.

\section{The \variant{\edgetransitions} problem}
\label{sec:edges}

Whereas \variant{\nodeitems} (Problem~\ref{problem:nodes-variant}) seeks $k$ 
nodes to optimize expected uncertainty, \variant{\edgetransitions} seeks
$k$ edges.
\begin{problem}[{\edgeproblem}]
Given $G=(V,E)$, transition matrix $\transition$, 
initial distribution of items to nodes
$\initial$ and integer $k$, find
$S\subseteq V$ such that $|S|=k$ such that 
$\objective_\shortedgetransitions\left(S\right)$
(Equation~\eqref{eq:edgetransitions}) is minimized.
\label{problem:edges-variant}
\end{problem}
We provide two polynomial-time algorithms to solve
the problem, namely \edgeDP\ and \edgegreedy.
For the former, we can also prove that it is optimal, and thus Problem~\ref{problem:edges-variant}
is solvable in polynomial time.

\spara{The {\edgeDP} algorithm:}
{\edgeDP} is a dynamic-programming algorithm that 
selects edges in two steps: first, it sorts the {\it outgoing} 
edges of
each node in decreasing order of transition probability, thus
creating $|V| = n$ corresponding lists; secondly, it combines top edges from
each list to select a total of $k$ edges.

In more detail, let $SOL_i(k)$ be the cost
of an optimal solution for the special case of
a budget of $k$ edges allocated among outgoing edges of nodes
$V_{i:} = \{i, i+1, \ldots, n\}$.
According to this notational convention,
the cost of an optimal solution $D_{opt}$ for the problem is given by 
$SOL_1(k)$.
Moreover, considering Equation~\eqref{eq:edgetransitions},
let $\objective_i$ be the function that corresponds to
the (outer) summation term for node $i$
\begin{equation}
	\objective_i(D) = \initial^{\prime\prime}(i)\sum_{v\in V\setminus S}\transition^{\prime\prime}(i,v)\left(1-\transition^{\prime\prime}(i,v)\right)
\end{equation}
(under the auxiliary definitions of Equations \eqref{eq:etinit}
and \eqref{eq:ettransit})
and $ISOL_i(m)$ its optimal value when $D$ contains no more than
$m\leq k$ outgoing edges from node $i$.
Let also $D_i^m$ be a subset of $k$ outgoing edges of $i$
with the highest transition
probabilities. 
It can be shown that the optimal value
for $\objective_i(D)$ is achieved for 
the edges $D_i^k$ with {\it highest} transition 
probability.
The following lemma states that, with choice restricted 
among the outgoing edges of a node, the optimal
objective value in the \edgetransitions\ setting is obtained for
the edges of highest transition probability.
\begin{lemma}
\begin{equation}
ISOL_i(m) =  \objective_i(D_i^m)	
\end{equation}
\begin{proof}
The optimization function is proportional to the following quantity:
\begin{equation}
f(E) \propto (\sum_{i\in D_u(E)} p_i) - {\sum_{i\in D_u(E)} p_i^2} / {(\sum_{i\in D_u(E)} p_i)}
\end{equation}
where $D_u(E)$ are the remaining (i.e., non-queried) outgoing edges of
parent-node $u$.

Consider two sets of edges $E_0$, $E_1$ $\subseteq O(u)$ of the same size,
all outgoing from a
single parent-node $u$, that differ only at one element.
The probabilities of the corresponding sets of {\bf remaining} edges
are:
\begin{equation}
 D_u(E_0): \{p_0\} \cup C;\;\;\;\;D_u(E_1): \{p_1\} \cup C
\end{equation}
where $p_0, p_1\not\in C$, $p_0 \leq p_1$.

Let $S = \sum_{i\in C} p_i$ and $SS = \sum_{i\in C} p_i^2$.
We take the difference of the optimization functions for the two sets $E_0$
and $E_1$.
\begin{align*}
	f(E_0) - f(E_1) \propto\ & p_0 - p_1 -  \frac{\sum_{i\in D_u(E_0)}{p_i^2}}{\sum_{i\in D_u(E_0)}{p_i}}  
		+  \frac{\sum_{i\in D_u(E_1)}{p_i^2}}{\sum_{i\in D_u(E_1)} {p_i}} \\
	 = & - (p_1 - p_0)\frac{SS + S^2}{(S + p_0)(S + p_1)} \leq 0.
\end{align*}
The above shows that selecting the set of edges so that the remaining edges
are associated with smaller probabilities leads to lower (better) values of the
optimization function.
\end{proof}
\label{lemma:singlenodeoptimality}
\end{lemma}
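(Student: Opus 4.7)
The plan is to prove the lemma by a standard exchange argument: among subsets of size $m$ drawn from the outgoing edges of node $i$, I will show that $\objective_i$ is minimized by the $m$ edges of largest transition probability, i.e., by $D_i^m$.

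First, I would rewrite $\objective_i(D)$ in a compact form that depends on $D$ only through the transition probabilities of the outgoing edges that are \emph{not} queried. Let $R$ denote these remaining outgoing edges of $i$, with probabilities $\{p_e\}_{e\in R}$, and let $S=\sum_{e\in R} p_e = 1-\rho(i,D)$. Substituting the definitions of $\initial^{\prime\prime}$ and $\transition^{\prime\prime}$ from Equations~\eqref{eq:etinit}--\eqref{eq:ettransit} into the formula for $\objective_i$ and simplifying yields
\[
\objective_i(D) \;=\; \initial(i)\left(S \;-\; \frac{\sum_{e\in R} p_e^2}{S}\right),
\]
which matches the functional form the paper uses in its proof.

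Second, I would carry out a single exchange step. Let $D$ be any feasible query set of size $m$ and suppose it contains an edge $e_0$ of probability $p_0$ while omitting some outgoing edge $e_1$ of probability $p_1 > p_0$. Form $D'=(D\setminus\{e_0\})\cup\{e_1\}$. Denote the multiset of probabilities of the outgoing edges of $i$ that remain under both $D$ and $D'$ (i.e., other than $e_0,e_1$) by $C$, and set $S_C=\sum_{p\in C}p$ and $SS_C=\sum_{p\in C}p^2$. Placing the two compact expressions for $\objective_i(D)$ and $\objective_i(D')$ over a common denominator and simplifying, the cubic terms cancel and one obtains
\[
\objective_i(D) - \objective_i(D') \;=\; \initial(i)\cdot(p_1-p_0)\cdot\frac{SS_C+S_C^2}{(S_C+p_0)(S_C+p_1)} \;\geq\; 0,
\]
so that the swap weakly decreases the objective.

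Finally, starting from any size-$m$ subset $D$, repeated application of the exchange terminates at $D_i^m$ after finitely many steps, never increasing $\objective_i$. Hence $\objective_i(D_i^m)$ is optimal and equals $ISOL_i(m)$. The main obstacle is precisely the algebraic simplification in the exchange step: without the compact rewriting of $\objective_i$ in terms of $S$ and $\sum_{e\in R} p_e^2$, the difference $\objective_i(D)-\objective_i(D')$ is awkward to factor, and the crucial observation is that after the cancellation the numerator is a nonnegative multiple of $p_1-p_0$, yielding the monotonicity in a single clean inequality.
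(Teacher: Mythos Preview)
Your proposal is correct and follows essentially the same exchange argument as the paper: you rewrite $\objective_i(D)$ in the compact form $\initial(i)\bigl(S-\sum_{e\in R}p_e^2/S\bigr)$, compare two size-$m$ sets differing in one edge, and obtain the same factored difference $(p_1-p_0)(SS_C+S_C^2)/[(S_C+p_0)(S_C+p_1)]$. The only cosmetic difference is the labeling (your $D$ and $D'$ correspond to the paper's $E_1$ and $E_0$, hence the flipped sign), and you make the termination of the repeated-exchange step explicit, which the paper leaves implicit.
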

Having the outgoing edges of $i$ sorted by transition probability,
we can compute $ISOL_i(m)$ for all $m = 0\ldots k$.

The dynamic programming equation is: 
\begin{equation}
	SOL_i(k) = \argmin_{0\leq m\leq k}\{ISOL_i(m) + SOL_{i+1}(k - m)\}
	\label{eq:dptop}
\end{equation}
\edgeDP\ essentially computes and keeps in memory 
$\|V\| \times (k+1)$ values according to Eq.\eqref{eq:dptop}.

\begin{lemma}
The {\edgeDP} algorithm is optimal for the {\edgeproblem} problem.
\begin{proof}
The proof follows from Lemma~\ref{lemma:singlenodeoptimality}
and by construction of the dynamic programming algorithm
(Equation~\eqref{eq:dptop}).
\end{proof}
\end{lemma}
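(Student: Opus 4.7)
The plan is to establish optimality by combining the separability of $\objective_{_\shortedgetransitions}$ across source nodes with Lemma~\ref{lemma:singlenodeoptimality} and a standard backward induction on the dynamic program.

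First, I would observe that $\objective_{_\shortedgetransitions}(D)$ decomposes as a sum of per-node contributions, one for each $u\in V$, where the contribution of node $u$ depends only on $D\cap O(u)$, with $O(u)$ denoting the set of outgoing edges from $u$. This is immediate from Equation~\eqref{eq:edgetransitions}: the quantity $\rho(u,D)$, and hence $\initial^{\prime\prime}(u)$ and each $\transition^{\prime\prime}(u,v)$ for $v\in\children(u)$, depends only on $D\cap O(u)$. Writing this contribution as $\objective_u(D\cap O(u))$ yields $\objective_{_\shortedgetransitions}(D) = \sum_{u\in V}\objective_u(D\cap O(u))$, so the problem decouples across source nodes.

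Next, I would apply Lemma~\ref{lemma:singlenodeoptimality}: for any target number $m_u$ of outgoing edges from $u$ to be placed in $D$, the per-node contribution $\objective_u$ is minimized precisely by the top-$m_u$ outgoing edges of $u$ in decreasing order of transition probability, and that minimum is $ISOL_u(m_u)$ by definition. Consequently, the original problem reduces to the separable integer-allocation problem of choosing nonnegative integers $m_1,\dots,m_n$ with $\sum_u m_u = k$ and $m_u \le |O(u)|$ so as to minimize $\sum_u ISOL_u(m_u)$.

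Finally, I would close the argument by reverse induction on $i$ from $n$ down to $1$, showing that $SOL_i(k)$ as defined by Equation~\eqref{eq:dptop} is exactly the optimum of this allocation problem restricted to nodes $\{i,i+1,\dots,n\}$. The base case $SOL_n(k) = ISOL_n(k)$ is immediate since all $k$ edges must be allocated to node $n$. For the inductive step, the recurrence enumerates every feasible split $(m,k-m)$ of the remaining budget between node $i$ and the tail $\{i+1,\dots,n\}$: by Lemma~\ref{lemma:singlenodeoptimality} node $i$ contributes $ISOL_i(m)$ at allocation $m$, and by the inductive hypothesis the tail contributes $SOL_{i+1}(k-m)$ at its own optimum; taking the outer minimum then yields the overall optimum, so at $i=1$ we recover the optimal cost $SOL_1(k)$. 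I do not anticipate a substantial obstacle here; the only delicate point is bookkeeping around the side constraint $m \le |O(i)|$, which can be absorbed by setting $ISOL_i(m) = +\infty$ for $m > |O(i)|$ so that infeasible splits are automatically excluded from the minimization.
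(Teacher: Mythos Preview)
Your proposal is correct and follows essentially the same approach as the paper: you make explicit the separability of $\objective_{_\shortedgetransitions}$ across source nodes, invoke Lemma~\ref{lemma:singlenodeoptimality} to reduce to a per-node allocation problem, and verify the standard DP optimality by backward induction, which is precisely what the paper's terse ``by construction'' argument is gesturing at. The only addition you provide beyond the paper is the careful handling of the side constraint $m\le |O(i)|$ via the $+\infty$ convention, which is a welcome clarification rather than a departure in method.
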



\emph{Running time:} \edgeDP\ 
computes $k\times |V|$ values. For each value to be
computed, up to $O(k)$ numerical operations are performed. Therefore,
\edgeDP\ runs in $O(k^2|V|)$ operations. Backtracking to retrieve the
optimal solution requires at most equal number of steps, so it does not
increase the asymptotic running time.

\spara{The {\edgegreedy} algorithm:}
\edgegreedy\ is a natural greedy algorithm that selects
$k$ edges in an equal number of steps, in each step selecting
one more edge to minimize $\objective_{\shortedgetransitions}$.



In all our experiments the performance of {\edgegreedy} is the same as the performance of
the optimal {\edgeDP} algorithm. However, we do not have a proof that the former is also optimal. We leave this as a problem for future work.

\emph{Running time:} Following Equation \eqref{eq:edgetransitions},
to select $k$ edges,
\edgegreedy\ 
invokes up to $k\times O(|E|)$ evaluations of
$\objective_{\shortedgetransitions}$.
As we discussed for \nodegreedy, if the evaluation of the 
objective function is naively implemented with a double summation,
the running time of \edgegreedy\ is 
$O(k|E||V|^2)$ numerical operations.
If the objective function is implemented as a summation over edges,
the running time improves to $O(k|E|^2)$.
Furthermore, following the observations similar to those we saw for {\nodegreedy}, the
running time of \edgegreedy\ becomes $O(|E| + k|E|) = O(k|E|)$.

We notice that \edgeDP\ has better
performance than \edgegreedy\ for dense graphs ($|E|\approxeq |V|^2$) 
and small $k$. Moreover, as with \nodegreedy, \edgegreedy\ is 
amenable to parallelization - the new value of the objective function
can be computed in independently for each edge that's considered for
selection.


\section{Experiments}
\label{sec:experiments}









In this section, we describe the results of our experimental evaluation using
real and synthetic data. The results demonstrate that our methods
perform better than other baseline methods with respect to our objective function. 
Moreover, using the bike-sharing network of Boston,
we provide anecdotal evidence that our methods pick meaningful nodes
to monitor.

\subsection{Experimental setup}

Let us first describe the experimental setup, i.e., the 
datasets and baseline algorithms used for evaluation.

\spara{Graph datasets:} We use the following graphs to 
define {\markovchain}s for our experiments.

\emph{\autonomoussystems} is a graph that contains information about 
traffic  between \textit{Autonomous Systems} of the Internet. 
The dataset was retrieved through the Stanford Large Network Dataset 
Collection (SNAP)~\cite{snapnets}.
We experimented with three snapshots of the 
communication graphs between years 1997 and 2000. 
Here we demonstrate results for one of the snapshots (1997-2000), 
as we did not find significant difference among them.

The {\autonomoussystems} graph contains one node for each 
AS. Moreover, for every pair of nodes between which there is traffic according
to the dataset, we place two directed edges between the nodes, one in
each direction. The resulting graph network contains 3015 nodes and 
11078 edges. To create an instance of the transition matrix,
we assign equal probabilities to the outgoing edges of each node.

\emph{{\grid} graphs:} The {\grid} graphs are 
planar, bi-directed grid graphs, 
where each node has in- and out-degree $4$ 
(with the exception of border nodes). The graph used in our experiments contains a total of 1000 nodes
in form of a $100 \times 10$ grid.

\emph{{\geo} graphs:} The {\geo} graphs are 
bi-directed geographic graphs. They are generated as follows:
1000 nodes are placed randomly within a unit square 
on the two-dimensional euclidean plane.
Subsequently, pairs of nodes are connected with directed edges in both
directions if their euclidean distance is below a pre-defined threshold
$ds = 0.01$.

\emph{{\ba} graphs}: The {\ba} graphs are  generated according to the
Barabasi-Albert model. According to the model,
nodes are added to the graph incrementally one after the other, 
each of them with outgoing edges to $m$ existing nodes selected
via preferential attachment. Here we show results for a graph with 1000 nodes and $m=3$, but the results
were similar for values $m=5,10$. 

Similar to the methodology of Gionis {\etal}~\cite{gionis2015bump}, 
the {\grid}, {\geo} and {\ba} graphs provide us 
with different varieties of synthetic graphs to explore the performance 
of our methods.

\spara{Item distributions:}
For each aforementioned graph, we generate an initial distribution
of items \initial\ according to one of the following four schemes.

\squishlist
\item \ego. Items are assigned in two steps. Firstly, one node
	is selected uniformly at random among all nodes. Secondly,
	$70\%$ of items are assigned randomly on the neighbors of the selected node
	(including the selected node itself). Finally, the remaining items
	are distributed randomly to the nodes outside the neighborhood of the 
	selected node.
\item \uniform. Each node is assigned the same number of items.
\item \direct. The number of items on each node is directly proportional to its
	out-degree. Note that items are distributed in a deterministic manner.
\item \inverse. The number of items on each node is assigned deterministically to
	be inversely proportional to its out-degree.
\squishend

Now each graph described above is combined with 
each item-distribution scheme. 
As a result, we obtain datasets of the form {\tt G-X}, where 
{\tt G} is any of {\autonomoussystems}, {\grid},  {\geo} and {\ba} 
and {\tt X} is any of the {\ego}, {\uniform}, {\direct} and {\inverse}.
For simplicity, for the datasets that are generated randomly, 
we perform experiments over a single fixed instantiation.

\spara{The {\hubway} dataset:} Hubway
is a bike-sharing system in the Boston metro area, with a fleet of 
over 1000 bikes and over 100 hubway stations where users can pick up or 
drop off bikes at.
Every time a user picks up a bike from a Hubway station, 
the system records basic
information about the trip, such as the pick-up and drop-off
station, and the corresponding pick-up and drop-off times. 
Moreover, the data contain the number of available bikes
at each Hubway station every minute. 
The dataset was made publicly available by Hubway for the purposes of
its Data Visualization 
Challenge\footnote{http://hubwaydatachallenge.org/}.

Using the dataset, we create instances of the problems we consider as follows.
Firstly, we create a complete graph by representing each station with one
node in the graph, and considering all possible edges between them.
Subsequently, we consider a time interval $(t_s, t_e)$ and the bikes that
are located at each station (node). 
Representing bikes as items in our setting, we assign a transition 
probability $\transition(u,v)$ between nodes $u$ and $v$ by considering the 
total number $n_u$ of bikes at station $u$ at start time $t_s$ and, among these bikes,
the number $n_{uv}$ of them that were located at station $v$ at end time $t_e$.
We then set $\transition(u,v) = n_{uv}/n_u$ and ignore edges with zero
transition probability.

We experimented with a large number of such instances for different intervals
$(t_s, t_e)$, with a moderate length of $2$ hours, to capture real-life
transitions from one node to another. For the experiments presented in the 
paper, we use a fixed instance for the interval between 10am and 12pm on
April 1st, 2012. In this interval, we consider 61 stations with at least one trip
starting or ending at each. We refer to the dataset so constructed as the {\hubway}
dataset.

\spara{Baseline algorithms:}
In order to assess the performance of our proposed algorithms for the
{\nodeproblem} and {\edgeproblem} problem variants,
we compare it to that of well-known baseline algorithms from
the literature. Since we are the first to tackle the problem of
\mcproblem, the baseline algorithms we compare with do not target our objective
function directly. Nevertheless, the comparison helps us highlight the settings
in which our algorithms are essential to achieve good performance
for \mcproblem.

Below, we describe the respective baselines for the two variants of the problem.

\emph{Baselines for {\nodeproblem}:} For a budget $k$, the following baselines
return a set of $k$ nodes with highest value for the respective measure:
\squishlist
	\item {\indegree}: number of incoming edges;
	\item {\inprobability}: total probability of incoming edges;
	\item {\nodebetweenness}: as defined in~\cite{brandes01faster,erdos15divide,riondato16fast};
	\item {\closeness}: as defined in~\cite{sabidussi1966centrality};
	\item {\nodenumitems}: number of items before transition;
\squishend

\emph{Baselines for {\edgeproblem}:}. For a budget $k$, the following baselines
return a set of $k$ edges with highest value for the respective measure:
\squishlist
	\item {\edgebetweenness}: as defined in~\cite{brandes01faster,erdos15divide,riondato16fast};
	\item {\edgenumitems}: expected number of items to transition over the edge;
	\item {\probability}: transition probability of the edge.
\squishend

In what follows, context determines which baseline and variant we refer to.

%

\begin{table*}[htbp]\small
\centering
\begin{tabular}{@{}llcccc@{}}
\toprule
\multirow{2}{*}{\begin{tabular}[l]{@{}l@{}}Graph\end{tabular}}                                                                         & \multirow{2}{*}{\begin{tabular}[l]{@{}l@{}}Item\\Distribution\ \ \ \ \ \ \end{tabular}} & \multirow{2}{*}{\begin{tabular}[l]{@{}l@{}}$r({\nodegreedy})$\end{tabular}} & \multirow{2}{*}{\begin{tabular}[l]{@{}l@{}}$r({\tt Node-Baseline}^\ast)$\end{tabular}} & \multirow{2}{*}{\begin{tabular}[l]{@{}l@{}}$r({\edgegreedy})$\end{tabular}} & \multirow{2}{*}{\begin{tabular}[l]{@{}l@{}}$r({\tt Edge-Baseline}^\ast)$\end{tabular}}\\ \\ \toprule
\multirow{4}{*}{\begin{tabular}[l]{@{}l@{}}{\autonomoussystems}\end{tabular}}  & {\ego}               & 0.06                                  & 0.24                                   & 0.14                                  & 0.15                                   \\
                                                                               & {\direct}            & 0.66                                  & 0.67                                   & 0.99                                  & 0.99                                   \\
                                                                               & {\uniform}           & 0.38                                  & 0.40                                   & 0.97                                  & 0.99                                   \\
                                                                               & {\inverse}           & 0.38                                  & 0.40                                   & 0.97                                  & 0.99                                   \\ \cmidrule(l){2-6}
\multirow{4}{*}{\begin{tabular}[l]{@{}l@{}}{\geo}\end{tabular}} 			   & {\ego}               & 0.00                                  & 0.06                                   & 0.01                                  & 0.02                                   \\
                                                                               & {\direct}            & 0.00                                  & 0.06                                   & 0.20                                  & 0.65                                   \\
                                                                               & {\uniform}           & 0.00                                  & 0.06                                   & 0.15                                  & 0.65                                   \\
                                                                               & {\inverse}           & 0.00                                  & 0.07                                   & 0.15                                  & 0.65                                   \\ \cmidrule(l){2-6}
\multirow{4}{*}{\begin{tabular}[l]{@{}l@{}}{\grid}\end{tabular}} 	           & {\ego}               & 0.27                                  & 0.27                                   & 0.29                                  & 0.29                                   \\
                                                                               & {\direct}            & 0.92                                  & 0.92                                   & 0.98                                  & 0.98                                   \\
                                                                               & {\uniform}           & 0.92                                  & 0.92                                   & 0.98                                  & 0.98                                   \\
                                                                               & {\inverse}           & 0.92                                  & 0.92                                   & 0.98                                  & 0.98                                   \\ \cmidrule(l){2-6}
\multirow{4}{*}{\begin{tabular}[l]{@{}l@{}}{\ba}\end{tabular}} 	               & {\ego}               & 0.18                                  & 0.56                                   & 0.26                                  & 0.26                                   \\
                                                                               & {\direct}            & 0.71                                  & 0.71                                   & 0.99                                  & 0.99                                   \\
                                                                               & {\uniform}           & 0.63                                  & 0.63                                   & 0.98                                  & 0.98                                   \\
                                                                               & {\inverse}           & 0.63                                  & 0.63                                   & 0.98                                  & 0.98                                   \\ \bottomrule                                                                             
\end{tabular}
\caption{Comparison of greedy algorithms with the best-performing baseline (${\tt Node-Baseline}^\ast$ and ${\tt Edge-Baseline}^\ast$) for $k=50$. 
For a given pair of graph and item-distribution scheme, $r(A)$ expresses the 
ratio of the expected uncertainty that algorithm $A$ achieves with $k=50$
monitoring operations over the initial uncertainty $\uncertainty_0$ 
(for $k=0$). Note that the best-performing baseline is different for different rows of the table.
}
\label{tab:objective-evolution-table}
\end{table*}

\begin{figure}
\begin{subfigure}[b]{0.23\textwidth}
\includegraphics[width=\textwidth]{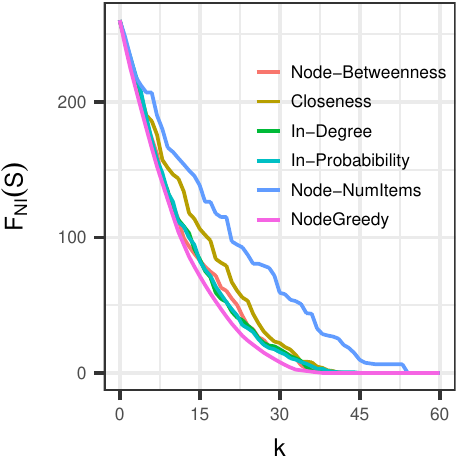}
\caption{{\nodeproblem}}
\label{fig:hubway_nodes}
\end{subfigure}
\begin{subfigure}[b]{0.23\textwidth}
\includegraphics[width=\textwidth]{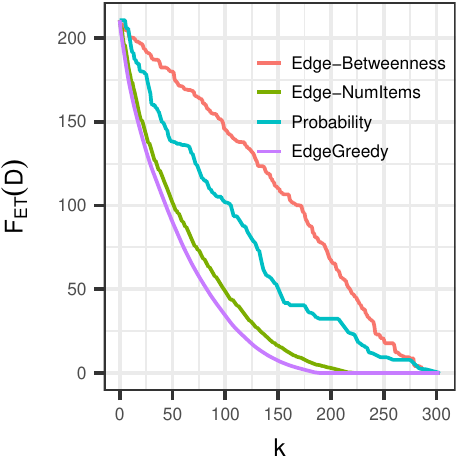}
\caption{{\edgeproblem}}
\label{fig:hubway_edges}
\end{subfigure}
\caption{{\hubway} data; $y$-axis: expected uncertainty, $x$-axis: number of monitored nodes or edges.}
\end{figure}

\begin{figure*}
\centering
\includegraphics{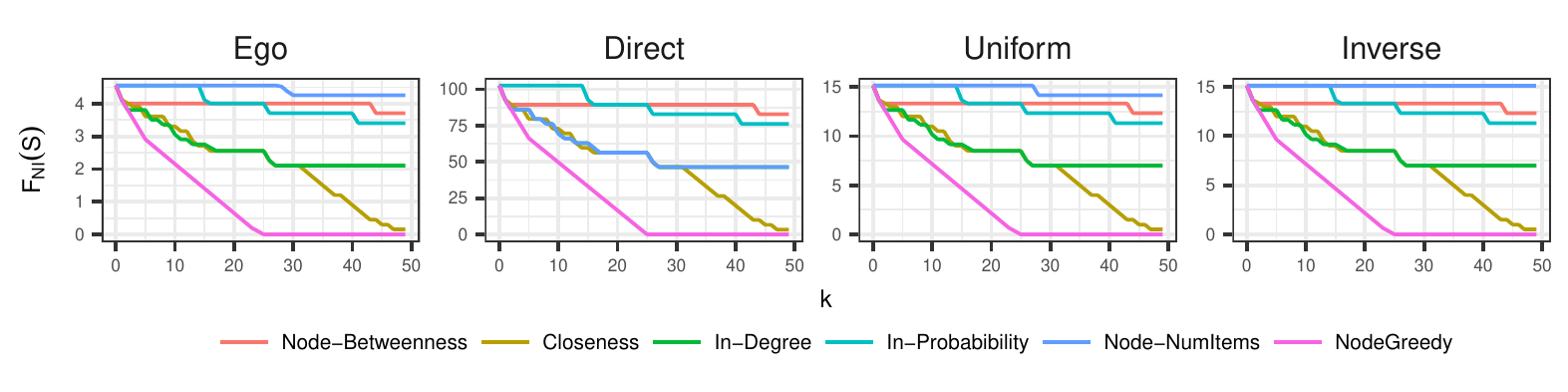}
\caption{{\nodeproblem} {\geo} dataset; $y$-axis expected uncertainty, 
$x$-axis: number of monitored nodes ($k$).}
\label{fig:geo_nodes}
\end{figure*}

\begin{figure*}
\centering
\includegraphics{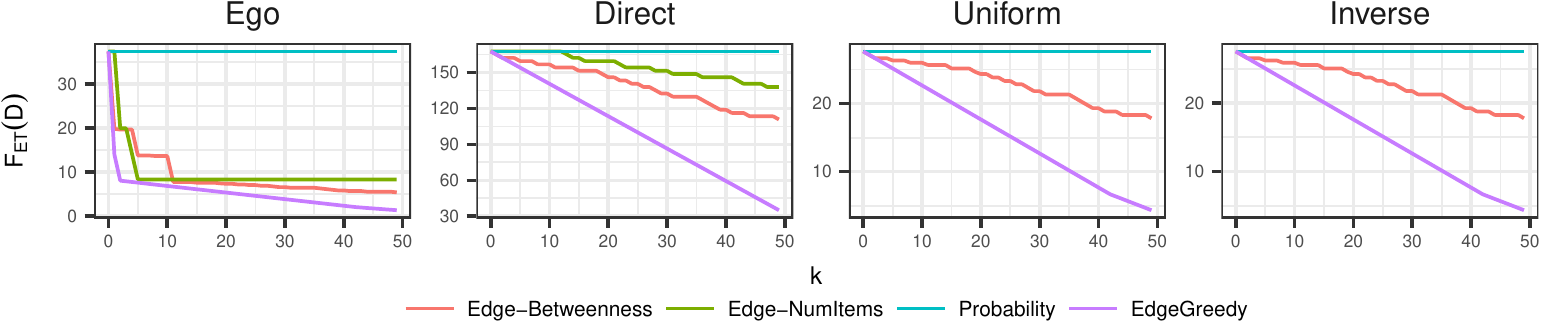}
\caption{{\edgeproblem} {\geo} dataset; $y$-axis expected uncertainty, 
$x$-axis: number of monitored edges ($k$).}
\label{fig:geo_edges}
\end{figure*}

\begin{figure*}
\centering
\includegraphics{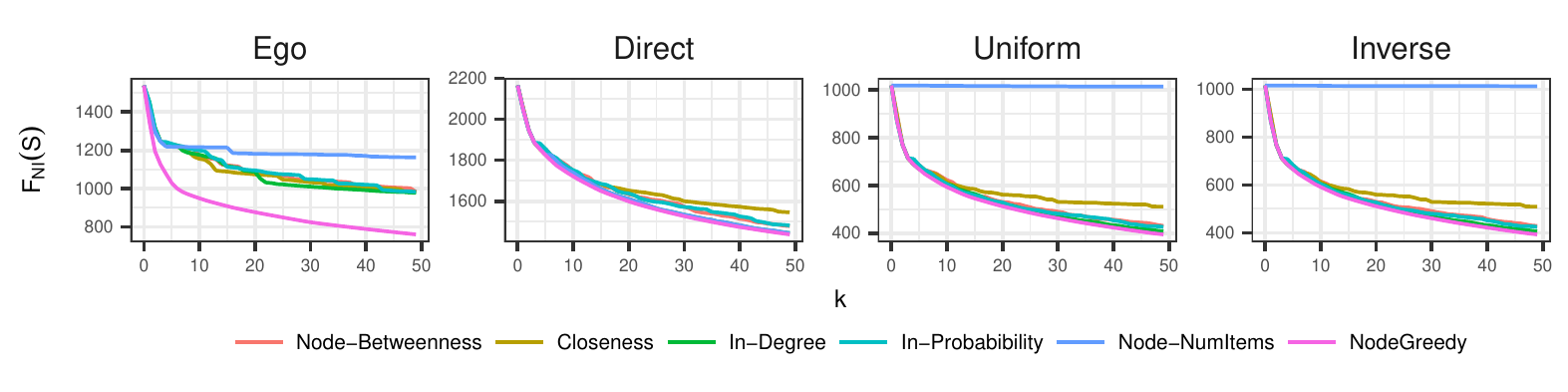}
\caption{{\nodeproblem} {\autonomoussystems} dataset; $y$-axis expected uncertainty, 
$x$-axis: number of monitored nodes ($k$).}
\label{fig:ass_nodes}
\end{figure*}

\begin{figure*}
\centering
\includegraphics{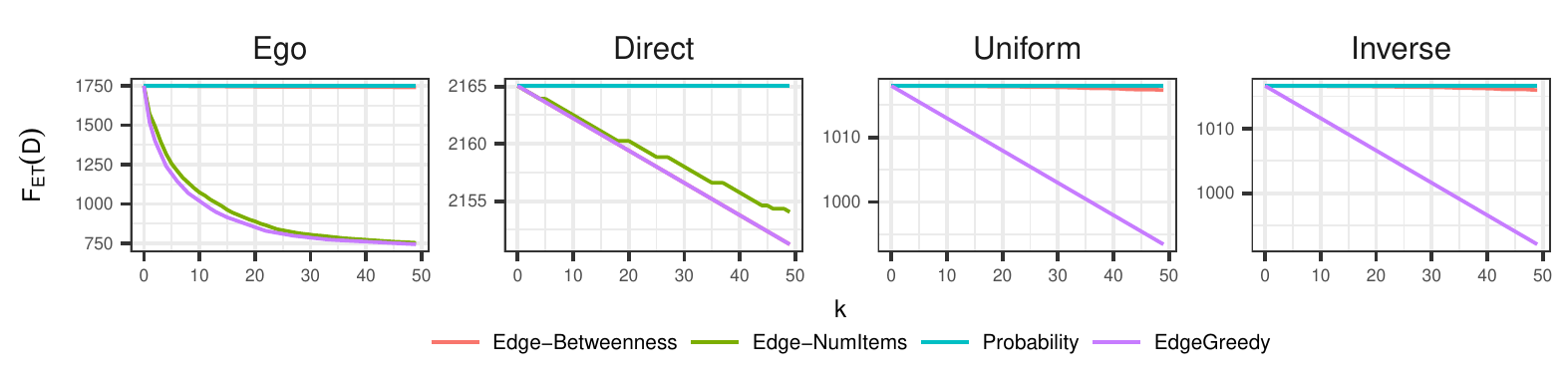}
\caption{{\edgeproblem} {\autonomoussystems} dataset; $y$-axis expected uncertainty, 
$x$-axis: number of monitored edges ($k$).}
\label{fig:ass_edges}
\end{figure*}

\begin{figure*}
\centering
\includegraphics{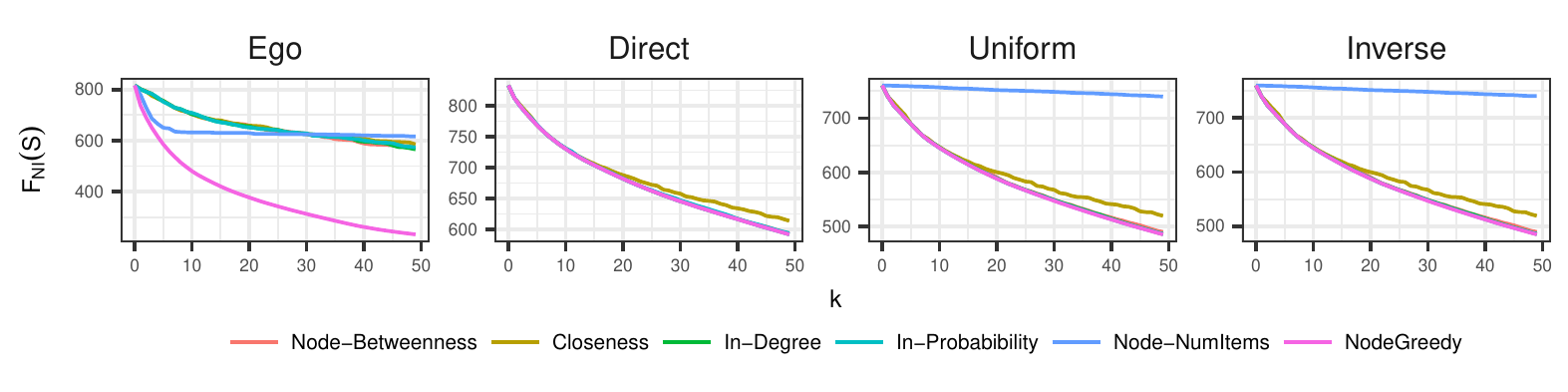}
\caption{{\nodeproblem} {\ba} dataset; $y$-axis expected uncertainty, 
$x$-axis: number of monitored nodes ($k$).}
\label{fig:ba3_nodes}
\end{figure*}

\begin{figure*}
\centering
\includegraphics{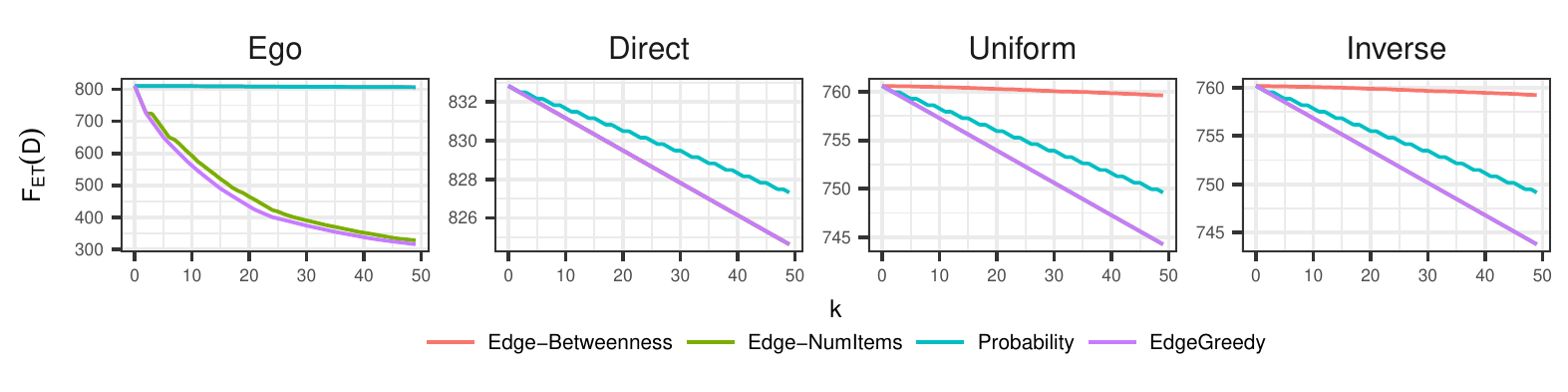}
\caption{{\edgeproblem} {\ba} dataset; $y$-axis expected uncertainty, 
$x$-axis: number of monitored edges ($k$).}
\label{fig:ba3_edges}
\end{figure*}

\begin{figure*}
\centering
\includegraphics{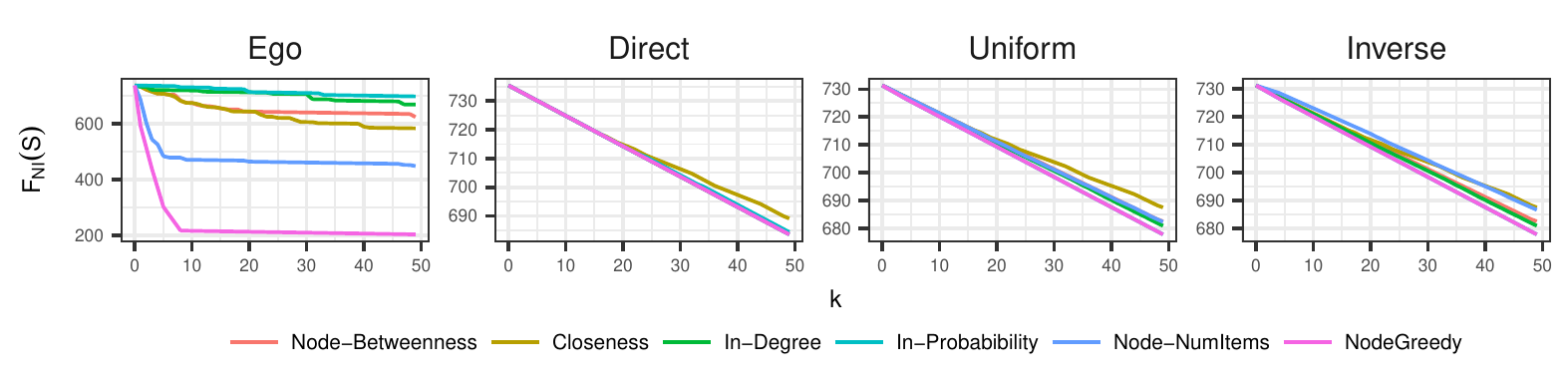}
\caption{{\nodeproblem} {\grid} dataset; $y$-axis expected uncertainty, 
$x$-axis: number of monitored nodes ($k$).}
\label{fig:grid_nodes}
\end{figure*}

\begin{figure*}
\centering
\includegraphics{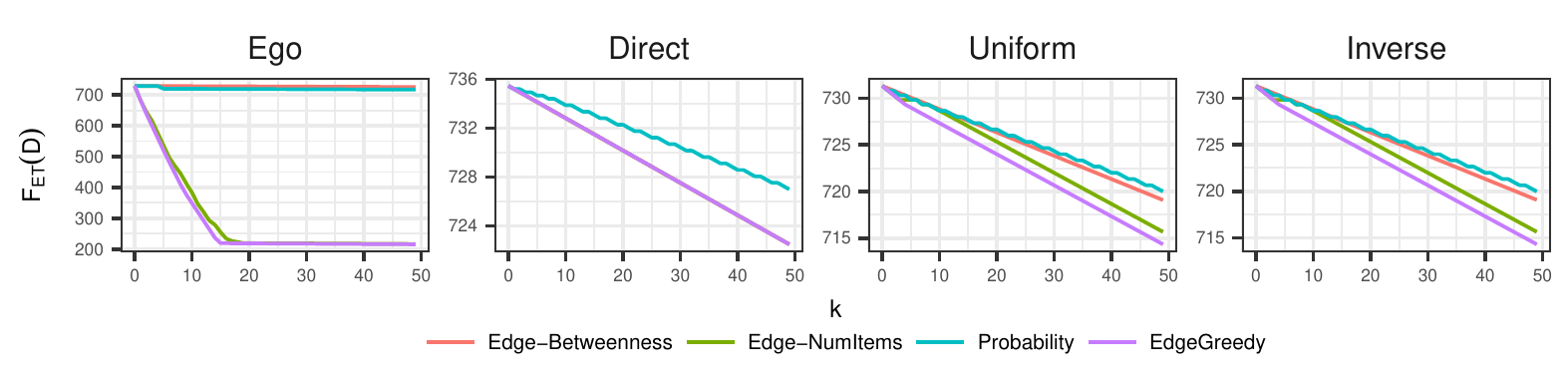}
\caption{{\edgeproblem} {\grid} dataset; $y$-axis expected uncertainty, 
$x$-axis: number of monitored edges ($k$).}
\label{fig:grid_edges}
\end{figure*}

\subsection{Experimental results}
In this section, we report the   
performance of algorithms for the {\mcproblem} problem -
first on the graph datasets, combined with item distribution schemes;
then on the \hubway\ dataset.
As objective we always use the expected uncertainty
achieved for a given budget $k$ of nodes or edges -- the smaller its value,
the better the performance of the algorithm.
Note that we do not report separately the performance of \edgeDP, as it achieves
same performance as \edgegreedy, but is not as efficient.

We provide the results for the graph datasets in Table \ref{tab:objective-evolution-table}. 
In all these experiments we use $k=50$.
Moreover, $r(A)$ is the ratio of the achieved objective value (for $k=50$) over
the initial value $\uncertainty_0$ of the measure 
(for no monitoring operations, i.e., $k=0$).
The table shows four quantities for every graph-item distribution pair
: $r(A)$ for $A=\{${\nodegreedy}, {\tt Node-Baseline}$^\ast$, {\edgegreedy}, {\tt Edge-Baseline}$^\ast$ $\}$.
Note that {\tt Node-Baseline}$^\ast$ (resp.\ {\tt Edge-Baseline}$^\ast$) refers to the baseline 
algorithm with the best performance.
For every algorithm $A$, $r(A)\in[0,1]$ and the smaller the value of $r(A)$ the better the performance of the algorithm.
 
From the table, we observe that for 
the {\autonomoussystems} dataset,  {\nodegreedy}  significantly outperforms the best
baseline for the {\ego} item distribution, while performing marginally better for other item distributions.
The value of  $r({\edgegreedy})$ is only slightly less than
the best baselines across all the configurations. However, we observe that there is no baseline
which performs uniformly the best across different item distributions. For example,  {\edgebetweenness}
is the best baseline for {\direct} item distribution, the {\edgenumitems} for {\ego}, while they both
perform worse than even randomly chosen edges for {\uniform} and {\inverse} item distributions.
Notably, for the {\geo} graphs, the greedy algorithms significantly outperform the baselines

For the {\grid} graphs, the baselines perform exactly
the same as our algorithms. This can be explained by the nature of the 
{\grid} graph, where all the nodes except the ones on the boundary are similar to each other,
thereby rendering the {\direct}, {\uniform} and {\inverse} item distributions very similar 
to each other. For the {\ego} distribution, the greedy algorithms perform marginally better
than the baselines.
Again,
there is no baseline which performs uniformly the best.
Similar is our explanation for the results on {\ba} graphs as in these graphs most of the nodes have
almost the same (small) degrees too. 
Figure~\ref{fig:geo_nodes}
shows the performance of the {\nodegreedy} algorithm for the 
the {\geo} graphs, with each plot corresponding to a different item
distribution schemes.
Observe that {\nodegreedy} significantly outperforms
all other baselines, which capture different semantics of centrality.
In particular, we observe that {\nodegreedy} achieves zero or near-zero
expected uncertainty with a small fraction of selected nodes compared to baselines.
Among the baselines, {\closeness} performs second-best in many cases,
while {\indegree} performs as well as {\closeness} for small $k$.

Similarly, Figure~\ref{fig:geo_edges} shows the 
performance of the different algorithms for the {\edgeproblem} and the {\geo} 
graphs, for all
possible item-distribution schemes.
As before, we observe that {\edgegreedy} outperforms the baselines in all 
cases. We notice also that the pattern of performance differs somewhat for the 
case of {\ego} item distribution. With the exception of one 
baseline ({\probability}), all algorithms achieve steep decline in expected uncertainty for
small value of $k$ - {\edgegreedy} performs best, but baselines are competitive. 
However, for larger $k$, the performance of baselines  
does not keep up with that of {\edgegreedy}.
We believe that this is can be explained as follows:
the first edges selected by 
baselines are either central in terms of graph structure 
-- and therefore near the part of the graph with high concentration of items 
({\edgebetweenness}) -- or directly in the area of the graph with many items 
({\edgenumitems}). 
In terms of reducing expected uncertainty, this is beneficial at first. However, these baselines
as they do not optimize our objective are not able to continue reducing the expected
uncertainty with their subsequent selections.

Figure~\ref{fig:ass_nodes} and Figure~\ref{fig:ass_edges} show the performance of 
the greedy algorithms on the {\nodeproblem} and the {\edgeproblem} problems
respectively. We observe that both the {\nodegreedy} and the {\edgegreedy} algorithms
are consistently the best when compared to the baselines. However, $k=50$ represents
about 1\% of the total edges in the graph, hence their monitoring does not decrease
the uncertainty significantly.
While experiments with larger values of $k$ are prohibitive due to time complexity of the {\edgegreedy} algorithm, we postulate that the greedy algorithm will still continue
outperforming the baselines.

Figures~\ref{fig:ba3_nodes} and ~\ref{fig:ba3_edges} provide a similar comparison
for the different configurations of the {\ba} graph. The greedy algorithms
provide marginal benefits or perform on par with competitive baselines. On the
{\ba} graphs, for {\direct}, {\uniform} and {\inverse} item distributions, some baselines
perform exactly the same as the greedy algorithms for relatively small number of
monitoring operations i.e., $k=50$. Lastly, we observe similar trends in case of the
{\grid} graphs as evident in Figures~\ref{fig:grid_nodes} and ~\ref{fig:grid_edges}.
It should be noted that there is no baseline method that provides a consistently competitive
performance with the greedy algorithms across all different configurations described above.
\balance

\spara{Experiments with {\hubway} data:}
In our last experiment, 
we explore the performance of our algorithms on the {\hubway} dataset. From Figure \ref{fig:hubway_nodes}
and Figure \ref{fig:hubway_edges} we observe that the {\nodegreedy} and {\edgegreedy} algorithms
are consistently the best at reducing expected uncertainty, although the baselines are competitive on the
relatively smaller graph. In Figure \ref{fig:hubway_stations}, we plot the Hubway stations across Boston
chosen by the {\nodegreedy} algorithm with $k=5$. The nodes chosen by the algorithm are supported by
the anecdotal evidence of being exactly some of the of the most popular landmarks around the city.
From a managerial perspective, tracking the number of trips starting or ending at these Hubway stations
can help the operators better reduce the expected uncertainty around the expected number of bikes
available at its different stations and anticipate future bike ``re-balancing"\footnote{https://www.citylab.com/transportation/2014/08/balancing-bike-share-stations-has-become-a-serious-scientific-endeavor/379188/} operations.

\spara{Running times:} For all our experiments we use a single process 
implementation of our algorithms on a 24-core 2.9GHz Intel Xeon E5 processor 
with 512GB memory. For the  largest graph in our experiments,  
the parallelized version of the {\nodegreedy}\ takes about $5-10$ 
seconds per selected node, while the parallelized version of {\edgegreedy}\ 
takes about $1$ minute per selected edge.
\begin{figure}
	\centering
	\includegraphics[width=8cm, height=4cm]{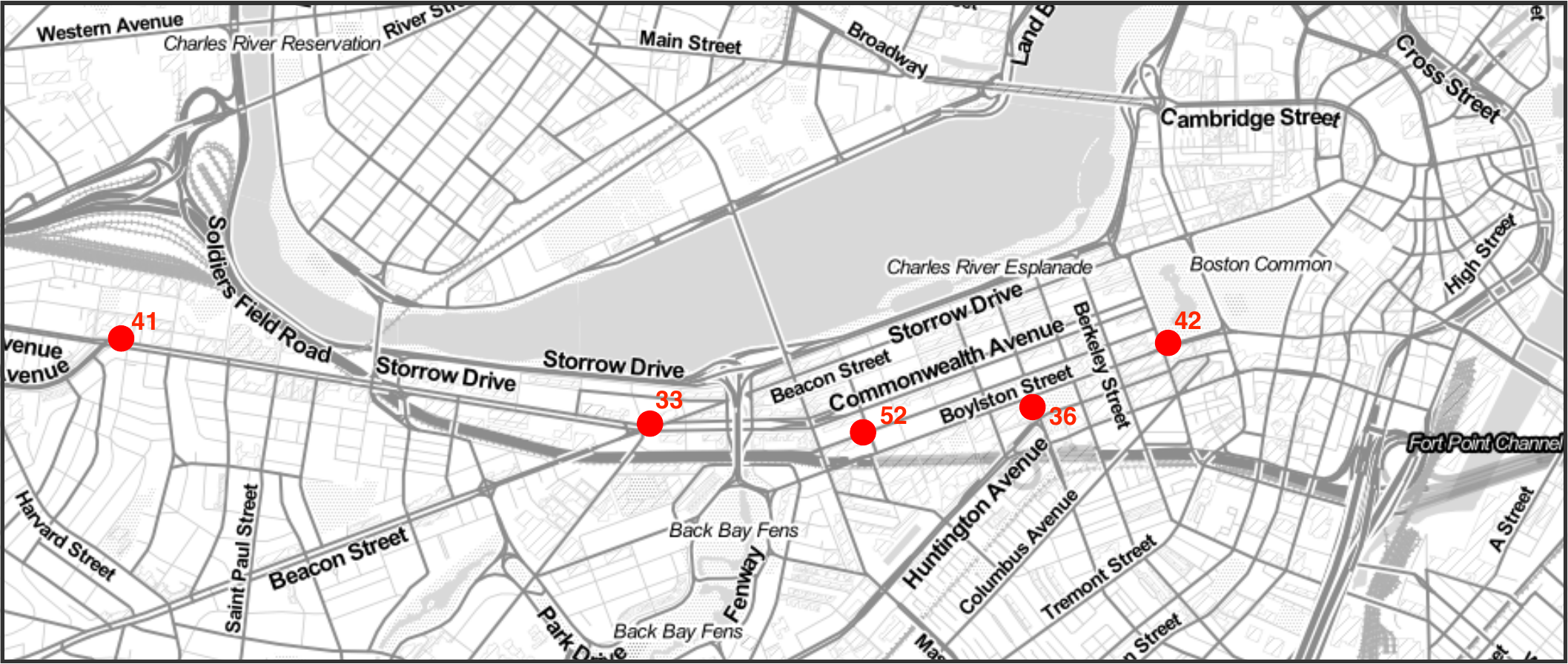}
	\caption{{\hubway} data. IDs of stations picked as a solution to {\nodeproblem}
	for $k=5$;
	33: Kenmore Sq., 36: Copley Sq./Boston Public Library, 41: Packard's Corner, 
	42: Boston Public Garden, 52: Newbury St.                        }
	\label{fig:hubway_stations}
\end{figure}

\spara{Discussion:} Our experiments show that  {\nodegreedy} and  {\edgegreedy} 
consistently perform better than or on par with other popular baseline methods. Also, for 
graphs with relatively large number of nodes, the solutions to the {\nodeproblem} problem 
are more effective at reducing
the expected uncertainty than the solutions to the {\edgeproblem} problem for the same number of
node (resp.\ edge) monitors.
This is
especially important considering our analysis from Section \ref{sec:nodes} and \ref{sec:edges} which show
that the {\nodegreedy} algorithm has a better time complexity compared to the {\edgegreedy} for dense graphs.

\section{Conclusions}

In this paper, we introduced the problem of \mcproblem: given
a distribution of items over a \markovchain, we aim to perform
a limited number of monitoring operations, so as to adequately
predict the position of items on the chain after one transition step.
We studied variants of this problem
and provided efficient algorithms to solve them. Our experimental evaluation
demonstrated the superiority of the proposed algorithms compared to 
baselines and the practical utility of the results in real settings.
A natural extension of this work is to to select monitoring operations under incomplete information
for initial item distribution
-- which would allow the operations to be deployed in perpetuity.
Another future-work direction is to consider different types of 
monitoring operations such as those
that combine knowledge of item placement on nodes
and edges. Finally, we can also consider more complex traffic models
(e.g., involving queuing and different transition delays between 
nodes~\cite{gallager2012discrete}).
In all cases, the algorithms we developed in this paper will serve as 
the basis for future work.

\todo[MM]{An interesting meta-question is to see which algorithm
leads to better objective function: greedy for \parentstransitions\
or for \childrentransitions?}
\bibliographystyle{plain}
\bibliography{bibliography}
\clearpage

\end{document}